\newtheorem{theorem}{Theorem}
\newtheorem{lemma}{Lemma}
\newtheorem{assumption}{Assumption}
\newtheorem{property}{Property}
 \let\old@ps@headings\ps@headings
 \let\old@ps@IEEEtitlepagestyle\ps@IEEEtitlepagestyle
 \def\confheader#1{%
 % for all pages except the first
 \def\ps@headings{%
 \old@ps@headings%
 \def\@oddhead{\strut\hfill#1\hfill\strut}%
 \def\@evenhead{\strut\hfill#1\hfill\strut}%
 }%
 % for the first page
 \def\ps@IEEEtitlepagestyle{%
 \old@ps@IEEEtitlepagestyle%
 \def\@oddhead{\strut\hfill#1\hfill\strut}%
 \def\@evenhead{\strut\hfill#1\hfill\strut}%
 }%
 \ps@headings%
 }
\newcommand{\linebreakand}{%
  \end{@IEEEauthorhalign}
  \hfill\mbox{}\par
  \mbox{}\hfill\begin{@IEEEauthorhalign}
}
\begin{document}

\title{Adaptive Tracking Control of Uncertain Euler-Lagrange Systems with State and Input Constraints
%using Barrier Lyapunov Function
}

\author{Poulomee~Ghosh and Shubhendu~Bhasin
\thanks{Poulomee Ghosh and Shubhendu Bhasin are with Department of Electrical Engineering, Indian Institute of Technology Delhi, New Delhi, India. 
       {\tt\small (Email: Poulomee.Ghosh@ee.iitd.ac.in, sbhasin@ee.iitd.ac.in)}}}
%\IEEEauthorblockA{\textit{Department of Electrical Engineering} \\
%\textit{Indian Institute of Technology Delhi}\\
%New Delhi-110016, India \\
%email@ee.iitd.ac.in}
%\and
%\IEEEauthorblockN{Shubhendu Bhasin}
%\IEEEauthorblockA{\textit{Department of Electrical Engineering} \\
%\textit{Indian Institute of Technology Delhi}\\
%New Delhi-110016, India \\
%email@ee.iitd.ac.in}
%\and

%\thanks{Poulomee Ghosh ans Shubhendu Bhasin are with Department of Electrical
%Engineering, Indian Institute of Technology Delhi, New Delhi, India. e-mail:(Poulomee.Ghosh@ee.iitd.ac.in,sbhasin@ee.iitd.ac.in)}

%\author{\IEEEauthorblockN{1\textsuperscript{st} Poulomee Ghosh}
%\author{Sangsaptak~Pal$^{1}$,
%Poulomee~Ghosh$^{2}$,
%        Suman~Roy$^{3,*}$,
%        and~Pravat~Biswal$^{4}$
%\thanks{$^{1,3,4}$ School of Electronics Engineering, Kalinga Institute of Industrial Technology Bhubaneswar, Bhubaneswar-751024, India. }
%\thanks{$^{2}$ Department of Electrical Engineering, Indian Institute of Technology Delhi, New Delhi-110016, India.}%

%}
\maketitle
\begin{abstract}
This paper proposes a novel control architecture for state and input constrained Euler-Lagrange (E-L) systems with parametric uncertainties. A simple saturated controller is strategically coupled with a Barrier Lyapunov Function (BLF) based controller to ensure state and input constraint satisfaction. To the best of the authors' knowledge, this is the first result for E-L systems that guarantee asymptotic tracking with user-specified state and input constraints. The proposed controller also ensures that all the closed-loop signals remain bounded. The efficacy of the proposed controller in terms of constraint satisfaction and tracking performance is verified using simulation on a robot manipulator system. 
%Simulation results validate the efficacy of the proposed controller in terms of better tracking performance and limited control effort compared to the conventional approach.

\end{abstract}

%\begin{IEEEkeywords}
%component, formatting, style, styling, insert
%\end{IEEEkeywords}

\section{Introduction}
\label{sec:intro}
 Most practical systems are subjected to constraints in terms of physical limitations, saturation, performance or safety limits which can often be translated into state and input constraints. Conventional adaptive control techniques are only equipped to deal with systems with parametric uncertainties, and ensure boundedness of both plant state and control input, however, the bound is neither known a-priori nor user-defined. In safety-critical applications, maintaining system states within the prescribed bound ensures system safety, albeit at the cost of a higher control effort. Large control magnitude might saturate the actuator and in turn, cause damage or deterioration of the process. Therefore, constraining the
plant states and input within known user-defined bounds
while meeting satisfactory performance objectives is a problem of practical interest.\\
Existing results that account explicitly for state constraints include model predictive control (MPC) \cite{mpc1},\cite{mpc}, optimal control theory \cite{opt1},\cite{opt}, invariant set theory \cite{blanchini},\cite{set}, reference governor approach \cite{rga}, \cite{gilbert} etc. Safety certificates like barrier function (BF), control barrier function (CBF) are widely used to guarantee system safety by ensuring forward invariance of a safe set with respect to a system model \cite{blanchini}. CBF is combined with control Lyapunov function (CLF) by solving a quadratic program (CBF-CLF-QP approach) to develop a stable and safe controller \cite{CBF2}, \cite{CBF3}. Most of these approaches typically require an optimization routine to be run which is computationally expensive and assumes complete model knowledge.\\
Another safety certificate is the Barrier Lyapunov Function (BLF) \cite{BLF},\cite{BLF2} which has been extensively used in literature to satisfy the state constraints for uncertain E-L systems by blending an error transformation \cite{elblf1}. However, most existing BLF-based methods do not address the problem of multiple state constraints that are necessary to ensure the safety of E-L systems. An alternative approach is the state transformation technique using BLF \cite{stc}. Although BLF-based controllers ensure that user-defined state constraints are met, they usually result in large control effort when the states approach
the boundary of the constrained region, often violating the
actuator’s operating limits. Therefore, imposing a user-defined bound on the required control effort along with multiple state  constraints can pave the way for wider applicability for safety-critical systems.\\
Various techniques including saturated functions\cite{incon10, inconnew, incon11, incon12, incon13}, saturated feedback controller\cite{annaswamy}, \cite{Lav}, reference governor, etc. have been extensively used to tackle adaptive control of uncertain plants with user-defined constraint on control input, but the simultaneous satisfaction of constraint on both state and input is still a less explored area of research.\\
Few control techniques exist that deal with the tracking control problem for uncertain nonlinear systems with state and input constraints. MPC \cite{dhar, mpc11, dhar2} is a popular control approach where both state and input constraints can be included in the optimization routine, albeit at the cost of computational complexity. In \cite{zcbf1}, a zeroing control barrier function (ZCBF) is constructed for E-L systems that respect both state and input constraints, but the construction is not straightforward and requires complete model knowledge.\\
%A recent work \cite{Anderson} develops an MRAC law that places user-defined bounds on state and input. The result, however, is achieved by developing an auxiliary reference model that complicates the analysis and design of adaptive laws.
The main contribution of this paper is the development of a control methodology for uncertain E-L systems to accommodate user-defined state and input constraints while simultaneously tracking the desired reference trajectory. A saturated feedback controller is designed by considering an auxiliary input signal to constrain the control input which is tactically incorporated with BLF-based controller that deals with the state constraints. Closed-loop signals are guaranteed to be bounded and the trajectory tracking error can be proved to converge to zero asymptotically.
%This paper is organized as follows.  Section II presents the problem formulation, section III elucidates proposed BLF-based methodology which satisfies state and input constraints. To justify the preeminence of the proposed controller, simulation results and comparative analysis are shown in section IV while section V comprises of conclusion and future works.

% \begin{enumerate}
%     \item[(i)] Plant states remain bounded within user-defined safe set while .
%     \item[(ii)] Control input is also limited within a pre-specified bound.
%     \item[(ii)] Asymptotic tracking of the plant state to desired trajectory.
% \end{enumerate}with user-defined state and

\section{Problem Formulation}
\label{sec:LFSR}
Throughout this paper, $\mathbb{R}$ denotes the set of real numbers, $\mathbb{R}^{p \times q}$ denotes set of $p\times q$ real matrices, the identity matrix in $\mathbb{R}^{p \times p}$ is denoted by $I_{p}$ and $\|.\|$ represents the Euclidian vector norm and corresponding equi-induced matrix norm. $\zeta^{(i)}(t)$ denotes the $i^{th}$ derivative of $\zeta$ with respect to time.

\subsection{Problem Statement}

Consider a general E-L system 
\begin{align}
    M(q)\ddot{q}+V_m(q,\dot{q})\dot{q}+G_r(q)+F_d(\dot{q})=\tau
    \label{plant11}
\end{align}
where $M(q)\in\mathbb{R}^{n\times n}$ denotes a generalized inertia matrix assumed to be known, $V_m(q,\dot{q})\in \mathbb{R}^{n\times n}$ denotes an unknown generalized centripetal-Coriolis matrix,  $G_r(q)\in \mathbb{R}^n$ and $F_d(\dot{q})\in \mathbb{R}^n$ represents the unknown generalized friction and gravity vectors, respectively. $\tau=[\tau_1,\hdots,\tau_n]^T\in\mathbb{R}^n$ is the generalized input control vector, and $q(t), \dot{q}(t), \ddot{q}(t) \in \mathbb{R}^n$ denote the link position, velocity, and acceleration vectors, respectively. \\
The following properties of the system dynamics are used for the subsequent development of control law and stability analysis \cite{Vidyasagar}.
\begin{property}
The inertia matrix $M(q)\in \mathbb{R}^{n \times n}$ is symmetric, positive definite and satisfies the following inequality
\begin{align}
    m_1\|\mu\|^2\leq \mu^TM(q)\mu \leq m_2\|\mu\|^2,
\end{align}
 where $m_1$ and $m_2$ are positive constants and $\mu \in \mathbb{R}^n$ is an arbitrary vector.
\end{property}

% \begin{property}
% The matrix $\dot{M}(q)-2V_m(q,\dot{q})$ is skew-symmetric, i.e.,
% \begin{align}
%     \mu^T(\dot{M}(q)-2v_m(q,\dot{q}))\mu=0, && \forall \mu\in\mathbb{R}^n
% \end{align}
% \end{property}

\begin{property}
The E-L dynamics is considered to be linearly parameterizable as follows
\begin{align}
    Y(q,\dot{q},\ddot{q})\theta=M(q)\ddot{q}+C(q,\dot{q})\dot{q}+G_r(q)
    \label{Property4}
\end{align}
where $Y:\mathbb{R}^n\times \mathbb{R}^n\times \mathbb{R}^n\rightarrow \mathbb{R}^{n\times m}$ is the known regression matrix, and $\theta\in \mathbb{R}^m$ is the unknown parameter vector.
\end{property}

\textbf{Constraint on plant state:}
 Plant states should remain within a user defined safe set given by $\Omega_q:=\{q(t),\dot{q}(t)\in \mathbb{R}^n:\|q\|< \beta_1, \|\dot{q}\|< \beta_2\}$, where  $\beta_1$ and $\beta_2$ are positive constants.\\
 \textit{Remark 1:} We consider separate bounds for position and velocity states that makes the development in the paper more generalized than imposing a single constraint on the norm of the entire state vector $\|\begin{bmatrix} q^T & \dot{q}^T \end{bmatrix}^T\|$, as typically done in literature.
 
 \begin{assumption}
 The desired reference trajectory $q_d(t)\in \mathbb{R}^n$ and its derivatives $q_d^{(i)}(t)$, $i=1,2$ are known and bounded.
\begin{align}
    & \|q_d(t)\| \leq \alpha_1 < \beta_1\\
    & \|\dot{q}_d(t)\|\leq \alpha_2 < \beta_2
    \end{align}
\end{assumption}
The tracking error dynamics can be expressed as

\begin{align}
    e(t) \triangleq q(t)-q_d(t)
    \label{error}
\end{align}

%Let, $e_1(t)=e\in \mathbb{R}^n$, $e_2(t)=\dot{e}\in \mathbb{R}^n$. 
Provided Assumption 1, the state constraints can be transformed to the constraints on the tracking error states: $\|e(t)\|<\delta_1$, $\|\dot{e}(t)\|<\delta_2$, $\forall t\geq 0$, where $\delta_i \in \mathbb{R}$ are positive constants given by $\delta_i=\beta_i-\alpha_i$, $i=1,2$ i.e. $\|e^{(i-1)}(t)\|<\delta_i \implies\|q^{(i-1)}(t)\|\leq\beta_i$, $i=1,2$.
\\~\\
\textbf{Constraint on control input:} Magnitude of the control input should remain bounded in a safe set given by $\Omega_{\tau} := \{\tau\in\mathbb{R}^{n}:\|\tau(t)\|\leq\tau_{max}\}$, where $\tau_{max}$ is a user-defined positive constant.

% The control objective is to design an input $\tau(t)$, such that $q(t)$ tracks the desired trajectory $q_d(t)\in \mathbb{R}^n$ i.e. $e(t) \triangleq q(t)-q_d(t)\rightarrow 0$ and $\dot{e}(t) \triangleq \dot{q}(t)-\dot{q}_d(t)\rightarrow 0$ as $t \rightarrow \infty$ while both the input and the state constraints are satisfied. Let, $e_1(t)=e\in \mathbb{R}^n$, $e_2(t)=\dot{e}\in \mathbb{R}^n$. Using Assumption 1, the state constraints can be transformed to the constraint on the tracking error: $\|e_i(t)\|<\delta_i$, $\forall t\geq 0$, where $\delta_i \in \mathbb{R}$ are positive constants given by $\delta_i=\beta_i-\alpha_i$, $i=1,2$ i.e. $\|e_i(t)\|<\delta_i \implies\|x_i(t)\|\leq\beta_i$, $i=1,2$.
\begin{assumption}
For both user-defined state and input constraints imposed on (\ref{plant11}), there exists a feasible control policy $\tau(t)$ that satisfies the control objective. 
\end{assumption}
The control objective is to design an input $\tau(t)$ for the uncertain system (\ref{plant11}), such that $q(t)$ tracks the desired trajectory $q_d(t)$ i.e. $e(t)\rightarrow 0$ and $\dot{e}(t)\rightarrow 0$ as $t \rightarrow \infty$ while both the state and the input remain in user-defined safe set.

%where $\hat{k}_x$ and $\hat{k}_r$ are the estimated controller parameters. Considering the following Lyapunov function,

%the following Lyapunov Function Candidate is considered for the plant $(1)$ and reference model c
%\begin{align}
 %   V(e,\tilde{k}_x,\tilde{k}_r)=e^TPe+tr(\tilde{k}^T_x\Gamma_x^{-1}\tilde{k}_x)+tr(\tilde{k}^T_r\Gamma_r^{-1}\tilde{k}_r)
%\end{align}
%

\section{Proposed Methodology}
To facilitate the design, a filtered  tracking error $r(t)\in \mathbb{R}^n$ is defined as
\begin{align}
    &r=\dot{e}+\alpha e
    % \implies &\dot{r}=\ddot{e}+\alpha\dot{e}\\
    % &=f(x)+g(x)\tau-\ddot{y}_d+\alpha(x_2-\dot{y}_d)\\
    % &=Y\theta+g\tau
    \label{fte}
\end{align}
where $\alpha$ is a positive constant. Differentiating (\ref{fte}) and using (\ref{Property4}), the above dynamics can be written as
\begin{align}
  \dot{r}=Y\theta+g\tau  
\end{align}
where, $Y\in\mathbb{R}^{n\times m}$ is known regressor matrix, $\theta\in\mathbb{R}^{m}$ is unknown parameter vector and $g(q)=M^{-1}(q)\in \mathbb{R}^{n\times n}$. Here, $Y\theta$ is given by
\begin{align}
    Y\theta=M^{-1}(-V_m\dot{q}-F_d-G_r-M\ddot{q}_d+M\alpha\dot{e})
\end{align}

%\begin{align}
 %  e(t)=x(t)-x_r(t)
%\end{align
\subsection{Input Constraint Satisfaction Using Saturated Control Design}

An auxiliary control input $v(t)\in \mathbb{R}^n$ can be considered as
\begin{align}
&v(t)=g^{-1}(-Y\hat{\theta}-K_1r)
\label{pc1}
\end{align}
where $v(t)\triangleq[v_1(t),\hdots, v_n(t)]^T $, $\hat{\theta}\in\mathbb{R}^m$ is the estimated unknown parameter vector and $K_1(t)\in \mathbb{R}^{n \times n}$ is a positive controller parameter gain. Inspired by \cite{annaswamy}, the saturated feedback controller is designed as
\begin{align}
&\tau_i(t)= \begin{cases}
v_i(t) & \text{if}\:\:\: |v_i(t)|\leq \frac{\tau_{max}}{\sqrt{n}}\\
\frac{\tau_{max}}{\sqrt{n}}sgn(v_i(t)) & \text{if}\:\:\: |v_i(t)|>\frac{\tau_{max}}{\sqrt{n}}
\end{cases},
&& i=1, \hdots , n
\label{pc2}
\end{align}
  
%$k_x \in \mathbb{R}^{m \times n}$ and $k_r\in \mathbb{R}^{m \times m}$ are the controller parameters.
%The trajectory tracking error between the plant state and  reference model system state is defined as
%\begin{align}
 %  e(t)=x(t)-x_r(t) 
%\end{align}
%Employing the standard matching conditions given by,
%\begin{align}
 %   &A+Bk_x=A_r\\
    %&Bk_r=B_r
%\end{align}
Using (\ref{pc1}) and (\ref{pc2}), the closed-loop dynamics of filtered tracking error can be expressed as
\begin{align}
    \dot{r}=Y\tilde{\theta}-K_1r+g\Delta \tau
    \label{edot}
\end{align}
where $\tilde{\theta}\triangleq\theta-\hat{\theta}\in \mathbb{R}^{m}$ is the parameter estimation error and $\Delta \tau(t)\in\mathbb{R}^{m}$ is defined as the difference between the control input $\tau(t)$ and auxiliary control input $v(t)$, i.e., $\Delta \tau(t) \triangleq \tau(t)-v(t)$. Note that, for the satisfaction of the input constraint, an extra term $g\Delta \tau$ is present in the closed loop dynamics (\ref{edot}), which can be treated as a disturbance term. To mitigate its effect, an auxiliary error signal $r_1(t)\in\mathbb{R}^n$ is considered.
\begin{align}
    \dot{r}_1=-K_1r_1+K_2\Delta \tau
\end{align}
where $K_2(t)\in \mathbb{R}^{n \times n}$ is a time-varying controller parameter. Let $r_d(t)$ be the difference between the actual and auxiliary error signals: $r_d(t) \triangleq r(t)-r_1(t)$, with the following dynamics 
\begin{align}
  \dot{r}_d=Y\tilde{\theta}-K_1r_d+K_d\Delta \tau
\end{align}
where $K_d(t)\triangleq g-K_2(t)\in \mathbb{R}^{n \times n}$ is a controller parameter.\\
%\begin{theorem}
%Consider the plant's dynamics (\ref{plant}) and %the reference model dynamics (\ref{ref})
%\end{theorem}
\subsection{State Constraint Satisfaction using BLF}\label{AA}

To ensure that the system states are within the user-defined bound, a BLF-based approach \cite{BLF} is considered.

\begin{assumption}
The initial condition of the reference model states $q_d(0), \dot{q}_d(0)$ are chosen such that the initial trajectory tracking error satisfies.
\begin{align}
   &\|e(0)\|<\kappa<\delta_1\\
   &\|\dot{e}(0)\|<\delta_2
\end{align}
where, $\kappa$ is a positive constant, chosen such that $\kappa=\frac{\alpha \delta}{1+\alpha}$ and $\delta=\min\{\delta_1,\delta_2\}$.  
\end{assumption}

For ease of stability analysis, the constraint on the norm of the tracking error and its derivative can be converted into the constraint on the filtered tracking error using (\ref{fte}), as shown subsequently. 

%\begin{definition}
%A Barrier Lyapunov Function is a scalar function
%$V(e)$, defined with respect to the system $\dot{e} = g(e)$ on an open
%region $\mathcal{D}$ containing the origin, that is continuous, positive definite,
%has continuous first-order partial derivatives at every point of $\mathcal{D}$,
%%has the property $V(e)\rightarrow \infty$ as $e$ approaches the boundary of $\mathcal{D}$,
%and satisfies $V(e(t))\leq \delta$ $\forall t \geq 0$ along the solution of $\dot{e} = g(e)$ for $e(0) \in \mathcal{D}$ and some positive constant $\delta$.
%\end{definition} 

%The control objective is to design a bounded control input $u(t)$ such that $x(t)$ can track $x_r(t)$ asymptotically.

%\subsection{Control Design}
%To implement BLF based control design, consider the plant given by $(1)$ and the reference model given by $(2)$. 

\begin{lemma}
For any positive constant $\kappa$, let $\Omega_r := \{r \in \mathbb{R}^n : \|r\|<\kappa\}\subset \mathbb{R}^n$ and $\Psi:=\mathbb{R}^N\times \Omega_r \subset \mathbb{R}^{N+n}$ be open sets. Consider the system dynamics given by
%where the state space is getting divided into constrained states and free states.
\begin{align}
    \dot{\mu}=f(t,\mu)
\end{align}
$\mu:=[r^T,\xi^T]^T\in \Psi$, where $\xi$ is the augmentation of the unconstrained states and the function $f:\mathbb{R}_{+}\times \Psi \rightarrow \mathbb{R}^{N+n}$ is measurable  for each fixed $\mu$ and locally Lipschitz in $e$, piecewise continuous and locally integrable on $t$. Suppose, there exists a positive definite, decrescent, quadratic candidate Lyapunov function $V_2(\xi):\mathbb{R}^N\rightarrow\mathbb{R}_{+}$ and continuously differentiable, positive definite, scalar function $V_1(r):\Omega_r \rightarrow \mathbb{R}_{+}$, defined in an open region containing the origin such that 
\begin{align}
    V_1(r)\rightarrow \infty \hspace{10pt} \text{as} \hspace{10pt} \|r\|\rightarrow \kappa
\end{align}
The candidate Lyapunov function can be written as $V(\mu)=V_1(r)+V_2(\xi)$. Given Assumption 3 and $r(0)\in\Omega_r$, if the following inequality holds
\begin{align}
    \dot{V}=\frac{\partial V}{\partial \mu}f\leq 0
\end{align}
then $r(t)\in \Omega_r$  $\forall t$.\\
\begin{proof}
For the proof of Lemma 1, see \cite{BLF}.
\end{proof}

\end{lemma}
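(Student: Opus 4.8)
The plan is to run the standard forward-invariance argument for Barrier Lyapunov Functions, playing the blow-up of $V_1$ at the boundary $\|r\|=\kappa$ against the monotone decrease of the total energy $V$. First I would use the stated regularity of $f$ (measurability in $t$, local Lipschitz continuity in $\mu$, piecewise continuity and local integrability) to invoke a Carath\'eodory-type existence-uniqueness theorem, producing a unique maximal solution $\mu(t)=[r(t)^T,\xi(t)^T]^T$ on a maximal interval $[0,T_{\max})$; by hypothesis $r(0)\in\Omega_r$, so the solution starts strictly inside the open set $\Psi$.

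The core step exploits $\dot V\le 0$: along the maximal solution this gives $V(\mu(t))\le V(\mu(0))=:V_0$ for every $t\in[0,T_{\max})$, and $V_0$ is finite because $r(0)\in\Omega_r$ keeps $V_1(r(0))$ finite while $V_2(\xi(0))$ is finite. Since $V_1$ and $V_2$ are positive definite, hence nonnegative, the decomposition $V=V_1+V_2$ yields the uniform bounds $V_1(r(t))\le V_0$ and $V_2(\xi(t))\le V_0$ throughout $[0,T_{\max})$. I would then argue by contradiction: if $\|r(t)\|$ were to approach $\kappa$, the barrier property $V_1(r)\to\infty$ would force $V_1(r(t))\to\infty$, contradicting $V_1(r(t))\le V_0$. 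Hence $\|r(t)\|$ stays bounded away from $\kappa$, so $r(t)$ remains in a compact subset of $\Omega_r$.

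It remains to upgrade this to $T_{\max}=\infty$, which is the part I expect to require the most care. The barrier bound already confines $r$ to a compact set; for the unconstrained block I would use that the positive definite quadratic $V_2$ is radially unbounded, so $V_2(\xi(t))\le V_0$ confines $\xi$ to a compact set as well. With $\mu(t)$ trapped in a compact subset of $\Psi$ for all $t<T_{\max}$, the standard finite-escape-time criterion excludes $T_{\max}<\infty$, giving existence for all $t\ge 0$ and therefore $r(t)\in\Omega_r$ for all $t$. The main obstacle is precisely this coupling: the constrained coordinate $r$ is held in by the barrier, but the unconstrained coordinate $\xi\in\mathbb{R}^N$ is controlled only through radial unboundedness of $V_2$, and one must ensure that neither block escapes to infinity or reaches the boundary in finite time.
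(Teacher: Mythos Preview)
Your argument is correct and is exactly the standard Barrier Lyapunov Function forward-invariance proof: monotonicity of $V$ bounds $V_1$ uniformly, the blow-up property of $V_1$ at $\|r\|=\kappa$ then traps $r$ in a compact subset of $\Omega_r$, radial unboundedness of the quadratic $V_2$ traps $\xi$, and compact confinement rules out finite escape time. The paper itself does not supply a proof of this lemma at all; it simply cites the original BLF reference \cite{BLF}, so your sketch in fact reproduces the argument the paper defers to and is, if anything, more complete than what appears in the paper.
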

%which is not constrained. it ensures the existence and uniqueness of the maximal solution $\mu(t)$ in the time interval $[0,T_{max})$. Now, 
%Now as the state $e$ is needed to be constrained within a safe set, it requires a Barrier Function 
To constrain $r(t)$ which in turn ensures state constraint satisfaction, consider a BLF $V_1(r)$ defined on the set $\Omega_r$, such that

\begin{align}
    &V_1(r)\triangleq\frac{1}{2}\log \frac{\kappa^2}{\kappa^{2}-r^Tr}
    %\rightarrow \infty &&  \|e\|\rightarrow k_{b}
\end{align}
If $r^Tr\rightarrow \kappa^{2}$, i.e.
when the constrained state $r(t)$ approaches the boundary of the safe set, the BLF $V_1(r)\rightarrow \infty$; this fact will be exploited to guarantee safety of the system.
The unconstrained states involve continuously differentiable and positive-definite quadratic functions. %$V_2:\mathbb{R}^N\rightarrow \mathbb{R}_+$, $V_3:\mathbb{R}^N\rightarrow \mathbb{R}_+$, $V_4:\mathbb{R}^N\rightarrow \mathbb{R}_+$ and $V_5:\mathbb{R}^N\rightarrow \mathbb{R}_+$  
%such that
%\begin{align}
    %&\underline{\gamma_2}(\|\tilde{k}_x\|)\leq V_2(\tilde{k}_x) \leq \overline{\gamma_2}(\|\tilde{k}_x\|)\nonumber\\
    %&\underline{\gamma_3}(\|\tilde{k}_r\|)\leq V_3(\tilde{k}_r) \leq \overline{\gamma_3}(\|\tilde{k}_r\|)\nonumber\\
    %&\underline{\gamma_4}(\|k_d\|)\leq V_4(k_d) \leq \overline{\gamma_4}(\|k_d\|)\nonumber\\
    %&\underline{\gamma_5}(\|k_1\|)\leq V_5(k_1) \leq \overline{\gamma_5}(\|k_1\|)
%\end{align}
%where $\underline{\gamma_2}, \hdots, \underline{\gamma_5}$, $\overline{\gamma_2}, \hdots, \overline{\gamma_5}$ are class $K_{\infty}$ functions.\\

%\textit{Assumption 1:} 

%\textit{Assumption 2.} 

Consider the candidate Lyapunov function $V(\mu):\Omega_r\times \mathbb{R}^{N}\rightarrow \mathbb{R}_{+}$ as,
\begin{align}
    V(\mu)&=%&V_1(e)+V_2(\tilde{k}_x)+V_3(\tilde{k}_r)+V_4(k_d)+V_5(k_1)\\
    \frac{1}{2}
    \bigg[\log\frac{\kappa^2}{\kappa^{2}-r^Tr}+r_d^Tr_d+\tilde{\theta}^T\Gamma^{-1}\tilde{\theta}\nonumber\\
    &+tr(K_d^T\Gamma_d^{-1}K_d)+tr(K_2^T\Gamma_2^{-1}K_2)\bigg]
    \label{lyap}
\end{align}
where $\mu:=[r^T,r_d^T,\tilde{\theta}^T, K_d^T, K_2^T]^T$.
$\Gamma \in \mathbb{R}^{m \times m}$,  $\Gamma_d \in \mathbb{R}^{n \times n}$ and $\Gamma_2 \in \mathbb{R}^{n \times n}$ are positive-definite matrices. Taking the time-derivative of $V$ along the system trajectory 
\begin{align}
    \dot{V}=&\frac{1}{(\kappa^{2}-r^Tr)}\bigg[r^T(Y\tilde{\theta}-K_1r+g\Delta\tau)\bigg]\nonumber\\
    &+r_d^T(Y\tilde{\theta}-K_1r_d+K_d\Delta \tau)\nonumber\\
    &-\tilde{\theta}^T\Gamma^{-1}\dot{\hat{\theta}}+tr(K_d^T\Gamma_d^{-1}\dot{K}_d)+tr(K_2^T\Gamma_2^{-2}\dot{K}_2) 
    \label{vdot11}
    \end{align}
    Substituting $g=K_d+K_2$ in (\ref{vdot}),
    \begin{align}
    \dot{V}=& \frac{1}{(\kappa^{2}-r^Tr)} \bigg [r^TY\tilde{\theta}-r^TK_1r+r^T(K_d+K_2)\Delta\tau \nonumber\\
    &+ r_d^TY\tilde{\theta}-r_d^TK_1r_d+r_d^TK_d\Delta\tau-\tilde{\theta}^T\Gamma^{-1}\dot{\hat{\theta}} \nonumber\\
    &+ tr(K_d^T\Gamma_d^{-1}\dot{K}_d)+tr(K_2^T\Gamma_2^{-2}\dot{K}_2) 
    \label{vdot}
\end{align}
Adaptive update laws are designed as
\begin{align}
    &\dot{\hat{\theta}}=\bigg[\frac{\Gamma r^TY}{\kappa^{2}-r^Tr}+ \Gamma r_d^TY\bigg]\nonumber\\
    &\dot{K}_d=-\bigg[\frac{\Gamma_d r\Delta\tau^T}{\kappa^{2}-r^Tr}+\Gamma_dr_d\Delta\tau^T \bigg ]\nonumber\\
    &\dot{K}_2=-\frac{\Gamma_2r\Delta \tau^T}{\kappa^{2}-r^Tr}
    \label{proposedlaw11}
\end{align}
Substituting (\ref{proposedlaw11}) in (\ref{vdot11}) yields
\begin{align}
    \dot{V}=-\bigg(\frac{r^TK_1r}{\kappa^{2}-r^Tr}+r_d^TK_1r_d\bigg) \leq 0
    \label{lyapfunc}
\end{align}
which is a negative semi-definite function. 
%Since $V(\mu)$ is positive definite and $\dot{V}(\mu)$ is negative semidefinite, $V(\mu(t))\leq V(\mu(0))$ for all $t \in [0, T_{max}]$.\\

%Given assumption 2, it can be easily inferred from $(4)$ and $(13)$, that $\|e(t)\|<\delta$ for $t\in [0,T_{max})$. So, there exists a compact subset $\mathcal{X}$ such that the maximal solution $\mu(t)\in \mathcal{X} \subseteq \Psi$ for all $t\in [0,T_{max})$. This proves the existence of $\mu(t)$ for all $t \in [0,\infty)$ according to \cite{Sontag}, which in turns ensures $\|e\|<\delta$ for all $t\in[0,\infty)$. \\

\begin{theorem}
For the E-L system (\ref{plant11}), provided Assumptions 1-3 hold and the following gain condition is satisfied $0<\alpha < \frac{-1+\sqrt{5}}{2}$, the proposed controller (\ref{pc1}), (\ref{pc2}) and the adaptive laws (\ref{proposedlaw11}) ensure the following.
\begin{enumerate}
    \item[(i)]  The plant states remain within the user-defined safe set given by $\Omega_q:=\{q(t),\dot{q}(t)\in \mathbb{R}^n:\|q\|< \beta_1, \|\dot{q}\|< \beta_2\}$
    \item[(ii)] The control effort is bounded within a user-defined safe set given by $\Omega_{\tau} := \{\tau\in\mathbb{R}^{n}: \|\tau\|\leq \tau_{max}\}$.
    \item[(iii)]  All the closed loop signals remain bounded. 
    \item[(iv)] The trajectory tracking error converges to zero asymptotically i.e. $e(t)\rightarrow 0$ as $t \rightarrow \infty$.
\end{enumerate}
\end{theorem}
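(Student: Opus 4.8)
The plan is to drive every claim off the candidate Lyapunov function \eqref{lyap} together with the already-established inequality \eqref{lyapfunc}, $\dot V\le 0$. I would dispatch claim (ii) first, since it is purely algebraic and independent of the Lyapunov machinery: the saturation map \eqref{pc2} guarantees $|\tau_i(t)|\le\tau_{max}/\sqrt{n}$ for each $i$, whence $\|\tau(t)\|=\big(\sum_{i=1}^{n}\tau_i^2\big)^{1/2}\le\tau_{max}$ for all $t$, so $\tau(t)\in\Omega_\tau$ by construction.

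Next, $\dot V\le0$ gives $V(\mu(t))\le V(\mu(0))<\infty$, and since $V$ is a sum of nonnegative terms each term is individually bounded. The BLF term $\tfrac12\log\frac{\kappa^2}{\kappa^2-r^Tr}$ staying finite forces $\|r(t)\|<\kappa$ for all $t$ (with $r(0)\in\Omega_r$ supplied by Assumption 3); this is exactly the conclusion of Lemma 1, which I would invoke directly. To obtain claim (i) I would then push $\|r\|<\kappa$ through the stable filter \eqref{fte} written as $\dot e=-\alpha e+r$: a comparison estimate gives $\|e(t)\|<\kappa/\alpha=\delta/(1+\alpha)<\delta_1$, and $\|\dot e\|\le\|r\|+\alpha\|e\|$ bounds the velocity error. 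Forcing $\|\dot e\|<\delta_2$ after substituting the position bound and $\kappa=\frac{\alpha\delta}{1+\alpha}$ is what produces the condition $\alpha^2+\alpha<1$, equivalently the stated gain window $0<\alpha<\frac{-1+\sqrt5}{2}$. Feeding these into $q=e+q_d$, $\dot q=\dot e+\dot q_d$ and using Assumption 1 then yields $\|q\|<\delta_1+\alpha_1=\beta_1$ and $\|\dot q\|<\delta_2+\alpha_2=\beta_2$, so the state remains in $\Omega_q$.

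Boundedness of the remaining quadratic terms yields $r_d,\tilde\theta,K_d,K_2\in\mathcal{L}_{\infty}$, hence $\hat\theta$ and $r_1=r-r_d$ are bounded; combined with $q,\dot q$ bounded this makes the regressor $Y$ and $g=M^{-1}$ (via Property 1) bounded, so $v$ in \eqref{pc1}, the saturated $\tau$, and $\Delta\tau=\tau-v$ are bounded, giving claim (iii). For claim (iv) I would use $\dot V\le0$ once more: $V$ is nonincreasing and bounded below, so it converges and $\int_0^\infty(\frac{r^TK_1r}{\kappa^2-r^Tr}+r_d^TK_1r_d)\,dt<\infty$; since $0<\kappa^2-r^Tr\le\kappa^2$ and $K_1>0$, this shows $r,r_d\in\mathcal{L}_2$. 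Differentiating \eqref{edot} and the $r_d$-dynamics shows $\dot r,\dot r_d\in\mathcal{L}_{\infty}$, so $r$ is uniformly continuous and Barbalat's lemma gives $r(t)\to0$; passing $r\to0$ back through \eqref{fte} yields $e(t)\to0$ and $\dot e(t)\to0$.

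I expect the main obstacle to be claim (i): transferring the single bound $\|r\|<\kappa$ into the two separate error bounds couples the position and velocity errors, and it is precisely this coupling, through the worst case of $\|\dot e\|\le\|r\|+\alpha\|e\|$, that pins down the golden-ratio gain condition and the specific value of $\kappa$; keeping these constants mutually consistent (and securing $r(0)\in\Omega_r$) is the delicate step. A secondary technical point is the uniform-continuity justification needed for Barbalat in (iv), which only goes through after the full boundedness established in (iii).
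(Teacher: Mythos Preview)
Your plan is correct and mirrors the paper's proof almost step for step: Lemma~1 with $\dot V\le 0$ yields $\|r\|<\kappa$; the filter \eqref{fte} converts this into bounds on $e,\dot e$ and hence on $q,\dot q$; the saturation \eqref{pc2} handles (ii) by construction; the quadratic terms in \eqref{lyap} give (iii); and $r\in\mathcal L_2$, $\dot r\in\mathcal L_\infty$ plus Barbalat give (iv).

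One quantitative slip: with your sharper comparison bound $\|e(t)\|<\kappa/\alpha$, the velocity estimate $\|\dot e\|\le\|r\|+\alpha\|e\|$ gives $\|\dot e\|<\kappa+\alpha\cdot(\kappa/\alpha)=2\kappa=\tfrac{2\alpha\delta}{1+\alpha}$, and forcing this below $\delta$ only requires $\alpha<1$, not $\alpha^2+\alpha<1$. The golden-ratio window in the theorem actually comes from the paper's \emph{cruder} bound $\|e\|<\kappa+\kappa/\alpha=\delta$, which feeds into $\|\dot e\|<\alpha\delta+\kappa=\tfrac{(\alpha^2+2\alpha)\delta}{1+\alpha}$ and then demands $\alpha^2+\alpha-1<0$. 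This does not break your argument (the theorem's hypothesis $\alpha<\tfrac{-1+\sqrt5}{2}$ implies $\alpha<1$), but the condition you would derive is not the one you claim.
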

\begin{proof}
(i) $V(\mu)$ in (\ref{lyap}) is positive definite and $\dot{V}(\mu)\leq 0$ from (\ref{lyapfunc}), which implies that $V(\mu(t))\leq V(\mu(0))$ $\forall t\geq0$. Since $V(\mu)$ is defined on the region $\Omega_{\mu}:= \{[r^T , \xi^T] \in \Psi: r^Tr\leq \kappa^{2} \}$, it can be inferred from Lemma 1 that 
\begin{align}
  r^Tr< \kappa^{2}  \implies \|r(t)\|< \kappa \hspace{20pt} \forall t\geq0
  \label{lambda1}
\end{align}
Now, by solving the differential equation (\ref{fte}) and employing Assumption 3, it can be proved that 
\begin{align}
    \|e(t)\|< \kappa+\frac{\kappa}{\alpha} \hspace{10pt} \text{and} \hspace{10pt} \|\dot{e}(t)\|<\alpha \delta+\kappa \hspace{15pt} \forall t\geq0
\end{align} 
Substituting the value of $\kappa$ (Assumption 3),
\begin{align}
    &\|e(t)\|<\delta \hspace{64pt} \forall t\geq0\\
    & \|\dot{e}(t)\|<\frac{\delta(\alpha^{2}+2\alpha)}{1+\alpha} \hspace{20pt} \forall t\geq0
    \label{ec}
\end{align}
Now, to show that $\frac{\delta(\alpha^{2}+2\alpha)}{1+\alpha}<\delta$, we choose $\alpha$ as 
\begin{align}
    \alpha^2+\alpha-1<0\implies 0<\alpha<\frac{-1+\sqrt{5}}{2}
\end{align}
which in turn proves that $\|\dot{e}\|<\delta$. Since $\delta=\min\{\delta_1,\delta_2\}$,
\begin{align}
    \|e(t)\|<\delta_1 \hspace{10pt} \text{and} \hspace{10pt} \|\dot{e}(t)\|<\delta_2 \hspace{15pt} \forall t\geq0
    \label{ec12}
\end{align}
 i.e. the trajectory tracking error and its derivative will be constrained within the user-defined safe set : $e(t), \dot{e}(t) \in \Omega_e$ $\forall t\geq0$.\\
Further, since the desired trajectory and the trajectory tracking error are bounded, i.e. $\|q_d(t)\|\leq \alpha_1$, $\|e(t)\|< \delta_1$, $\|\dot{e}(t)\|< \delta_2$, it can be easily inferred from (\ref{error}) that the proposed controller ensures the plant states are bounded within the user defined safe set
\begin{subequations}
\begin{align}
    &\|q(t)\|<\delta_1+\alpha_1= \beta_1 \\
     &\|\dot{q}(t)\|<\delta_2+\alpha_2= \beta_2&& \forall t \geq 0
\end{align}
\end{subequations}
%Thus the state constraint gets satisfied, i.e. $x(t)\in \Omega_x$ for all $t\geq 0$.\\
(ii) The control effort of the proposed controller $\tau(t)=[\tau_1(t),\hdots, \tau_n(t)]^T$ and $\|\tau(t)\|=\sqrt{\tau_1^2(t)+\tau_2^2(t)+\hdots+\tau_n^2(t)}$. For constraining the control input two cases are considered.\\
\textit{Case 1:} $\|v_i(t)\|\leq \frac{\tau_{max}}{\sqrt{n}}$\\
For this case, $\tau_i(t)=v_i(t)$ and $\Delta \tau(t)=0$.
So, $|\tau_i|\leq \frac{\tau_{max}}{\sqrt{n}}$ which implies $\|\tau\|<\tau_{max}$\\
\textit{Case 2:} $\|v_i(t)\|> \frac{\tau_{max}}{\sqrt{n}}$\\
For this case, $\tau_i(t)=\frac{\tau_{max}}{\sqrt{n}}sgn(v_i(t))$ which proves $\|\tau\|<\tau_{max}$.\\

(iii) Since the closed-loop trajectory tracking error and the controller parameter estimation errors remain bounded and $\theta(t)$ is constant, it can be concluded that the estimated parameters are also bounded i.e. $\hat{\theta}(t)\in \mathcal{L}_{\infty}$ followed by ensuring the plant states $q(t), \dot{q}(t)$ and control input $\tau(t)$ to be bounded for all time instances. Thus, the proposed controller guarantees that all the closed-loop signals are bounded.\\

%It can be seen that, $x(t)\in \mathcal{L}_{\infty}$ and $\hat{K}_x, \hat{K}_r\in \mathcal{L}_{\infty}$. Now, since $V(\mu)>0$ and $\dot{V}(\mu)$ is negative semi-definite (\ref{lyapfunc}) it can be shown that $e(t)\in \mathcal{L}_2$, and from (\ref{}) it can be inferred that $\dot{e}(t)\in \mathcal{L}_{\infty}$. Therefore, $e(t)$ is uniformly continuous. Further from (\ref{lyapfunc}) it can also be shown that $e$, $\tilde{K}_x$, $\tilde{K}_r$, $k_d$, $k_1$ $\in \mathcal{L}_{\infty}$, $e,e_d \in \mathcal{L}_2\cap \mathcal{L}_{\infty}$, and $\dot{e}\in \mathcal{L}_{\infty}$. Consequently, it can be proved that both $e(t)$ and $e_d(t)$ converges to zero asymptotically as $t \rightarrow \infty$.\\

(iv) Since $V(\mu)>0$ and $\dot{V}(\mu)$ is negative semi-definite (\ref{lyapfunc}), it can be shown that 
$r(t)$, $r_d(t)$, $\tilde{\theta}(t)$, $Y(t)$, $K_d(t)$, $K_1(t)$, $K_2(t)$, $\Delta\tau(t)$ $\in \mathcal{L}_{\infty}$. Further, from (\ref{lyapfunc}) it can be shown that $r(t)\in \mathcal{L}_2$ and from (\ref{edot}) it can be inferred that $\dot{r}(t)\in \mathcal{L}_{\infty}$. Therefore, $r(t)$ is uniformly continuous. Invoking Barbalat's Lemma \cite{slotine}, it can be proved that $r(t)\rightarrow 0$ as $t\rightarrow\infty$, which consequently ensures that $e(t)$ and $\dot{e}(t)$ converges to zero asymptotically as $t \rightarrow \infty$.
\end{proof}

\section{Simulation Results}
To demonstrate the efficacy of the proposed algorithm for constrained E-L systems, the dynamics of a two-link robot manipulator is considered.

\begin{equation}
M(q)\ddot{q}+V_m(q,\dot{q})\dot{q}+F_d(\dot{q})+G_r(q)=\tau   
\label{plant}
\end{equation}
where the respective matrices related to (\ref{plant}) are given by:
\begin{align*}
    &M(q)=\begin{bmatrix}
    p_1+2p_3c_2 & p_2+p_3c_2\\
    p_2+p_3c_2 & p_2
    \end{bmatrix}\\
    &V_m(q,\dot{q})=\begin{bmatrix}
    -p_3s_2\dot{q_2} & -p_3s_2(\dot{q_1}+\dot{q_2})\\
    p_3s_2\dot{q_1}  & 0
    \end{bmatrix}\\
    &F_d(\dot{q})=\begin{bmatrix}
    f_{d_1} & 0\\
    0 & f_{d_2}
    \end{bmatrix}\\
    &G_r(q)=0_{2\times 1}
\end{align*}
where, $q(t)=[q_1(t), q_2(t)]^T\in \mathbb{R}^2$ and $\dot{q}(t)=[\dot{q}_1(t), \dot{q}_2(t)]^T\in \mathbb{R}$ denote the angular position (rad) and the velocity of the robot links respectively.
%The other parameters used for simulation are taken as:
The desired trajectory is considered as
\begin{equation*}
   q_d(t)=\begin{bmatrix}
    0.5sint \\ 2cos(t/4)
    \end{bmatrix}
\end{equation*}
The control objective is to design a controller such that plant states $q(t),\dot{q}(t)$  track the desired reference trajectory $q_d(t),\dot{q}_d(t)$, while simultaneously satisfying the state and input constraints given by
\begin{align*}
    &\|q(t)\|<\beta_1=3.6\\
    &\|\dot{q}(t)\|<\beta_2=2.1\\
    &\|\tau(t)\|\leq\tau_{max}=5
\end{align*}
The other parameters used for simulation are chosen as: $\Gamma=10\mathbb{I}_{m\times m}$, $\Gamma_d=5\mathbb{I}_{n\times n}$, $\Gamma_2=5\mathbb{I}_{n\times n}$, %$Q=\begin{bmatrix}
%1 &0\\
%0 &1
%\end{bmatrix}$,
$\alpha_1=2$, $\alpha_2=0.6$, $\alpha=0.6$, $K_1=10$, $p_1=3.473$ kg-m, $p_2=0.196$ kg-m, $p_3=0.242$ kg-m, $f_{d_1}=5.3$ N s, $f_{d_2}=1.1$ N s.\\
Given Assumption 1, $\|q_d\| \leq 2$, $\|\dot{q}_d\| \leq 0.6$, the state constraint is equivalent to satisfying the constraint on the trajectory tracking error i.e. $\|e\|<\delta_1=1.6$, $\|\dot{e}\|<\delta_2=1.5$. As $\delta=\min\{\delta_1,\delta_2\}=1.5$ and $\kappa=\frac{\alpha \delta}{1+\alpha}=0.56$, from proof of Theorem 1 it can be easily inferred that satisfying constraint on the filtered tracking error $\|r\|<0.56$ will consequently ensure state constraint satisfaction.\\
To show the effectiveness of the proposed control law, we compare it with classical adaptive controller where the control input and adaptive update law are given by
\begin{align}
&\tau_{c}(t)=g^{-1}(-Y\hat{\theta_c}-K_1r)\\
&\dot{\hat{\theta}}_c=\Gamma_cr^TY
\label{c1}
\end{align}
where the adaptive gain is chosen as $\Gamma_c=100\mathbb{I}_{m\times m}$. Note that, adaptation gains for both the proposed controller and the classical method are tuned to achieve comparable tracking performance. 
\begin{figure}[htbp!]
\centering
\includegraphics[width=\linewidth]{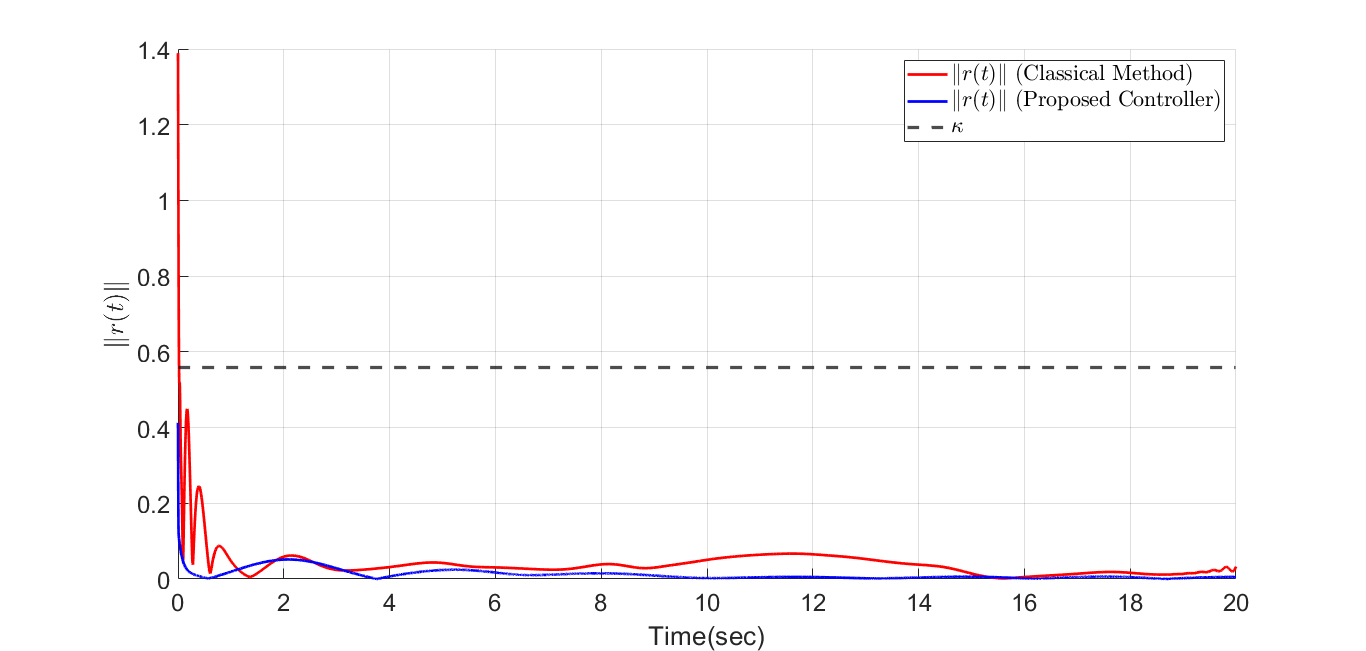}
\caption{Comparison of the filtered tracking error between the classical (\ref{c1}) and the proposed (\ref{proposedlaw11}) adaptive controllers.}
\label{fte111}
\end{figure}

\begin{figure}[htbp!]
\centering
\includegraphics[width=\linewidth]{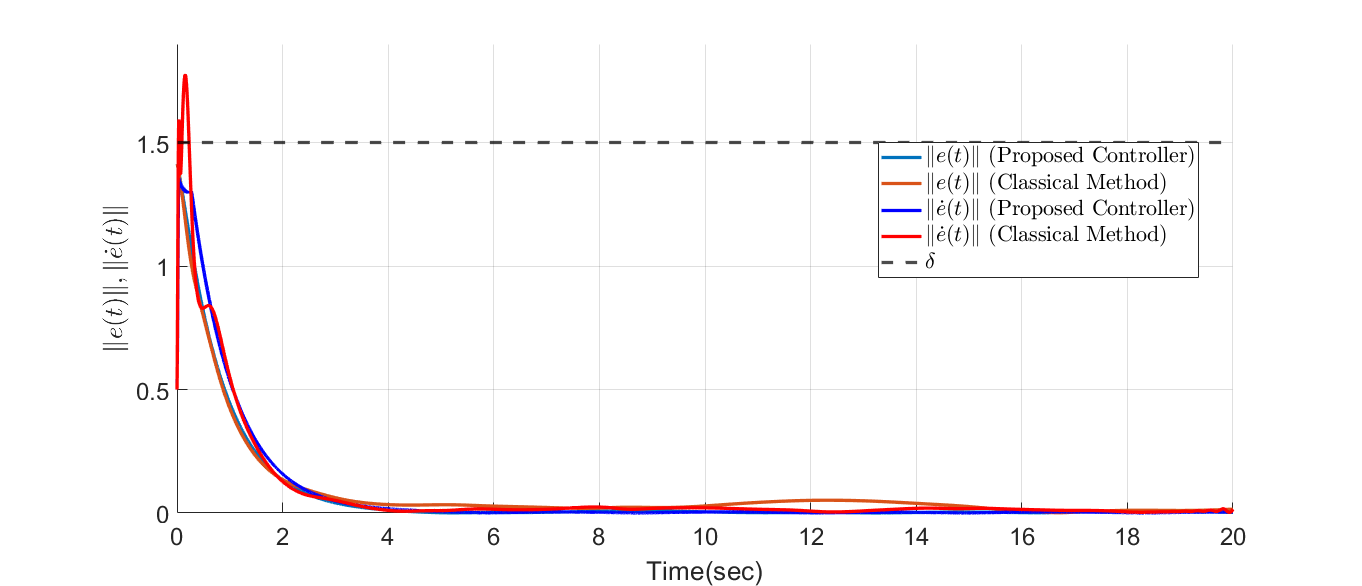}
\caption{Comparison of the trajectory tracking error between the classical (\ref{c1}) and the proposed (\ref{proposedlaw11}) adaptive controllers.}
\label{e}
\end{figure}

\begin{figure}[htbp!]
\centering
\includegraphics[width=\linewidth]{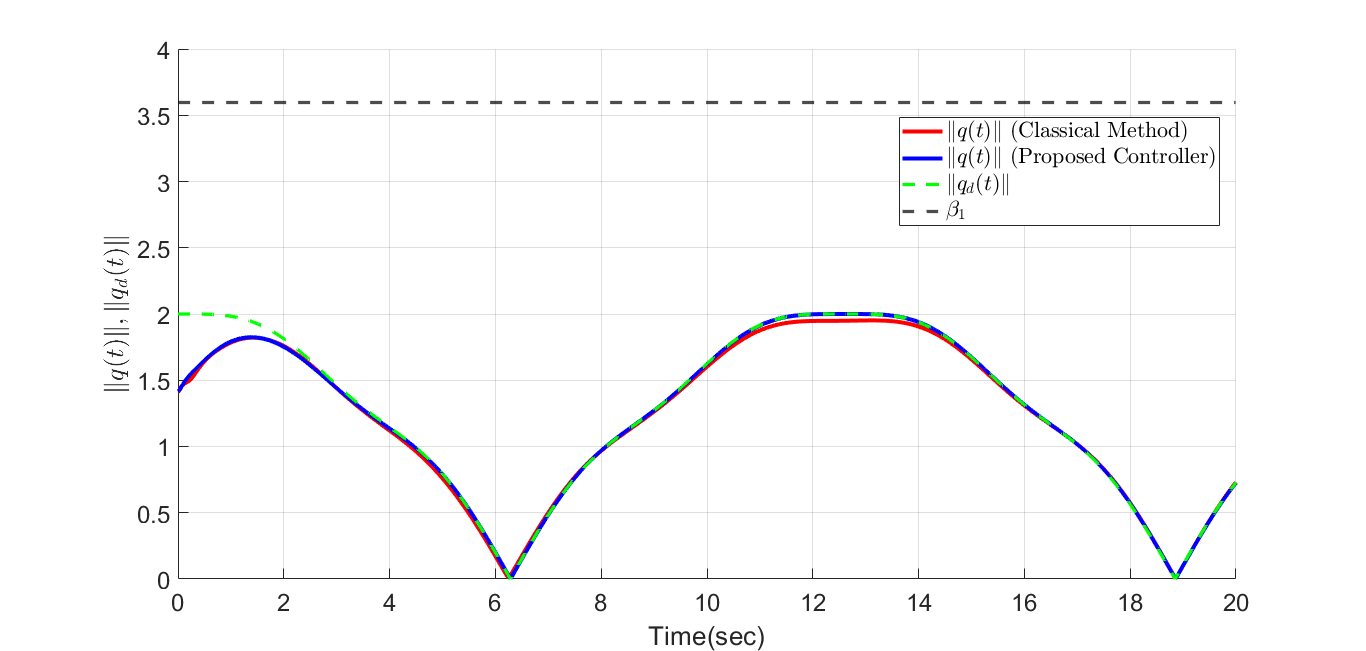}
\caption{Comparison of the angular position $({q}(t))$  between the classical (\ref{c1}) and the proposed (\ref{proposedlaw11}) adaptive controllers. The desired velocity and the velocity constraint are shown using the dotted lines.}
\label{q}
\end{figure}

\begin{figure}[htbp!]
\centering
\includegraphics[width=\linewidth]{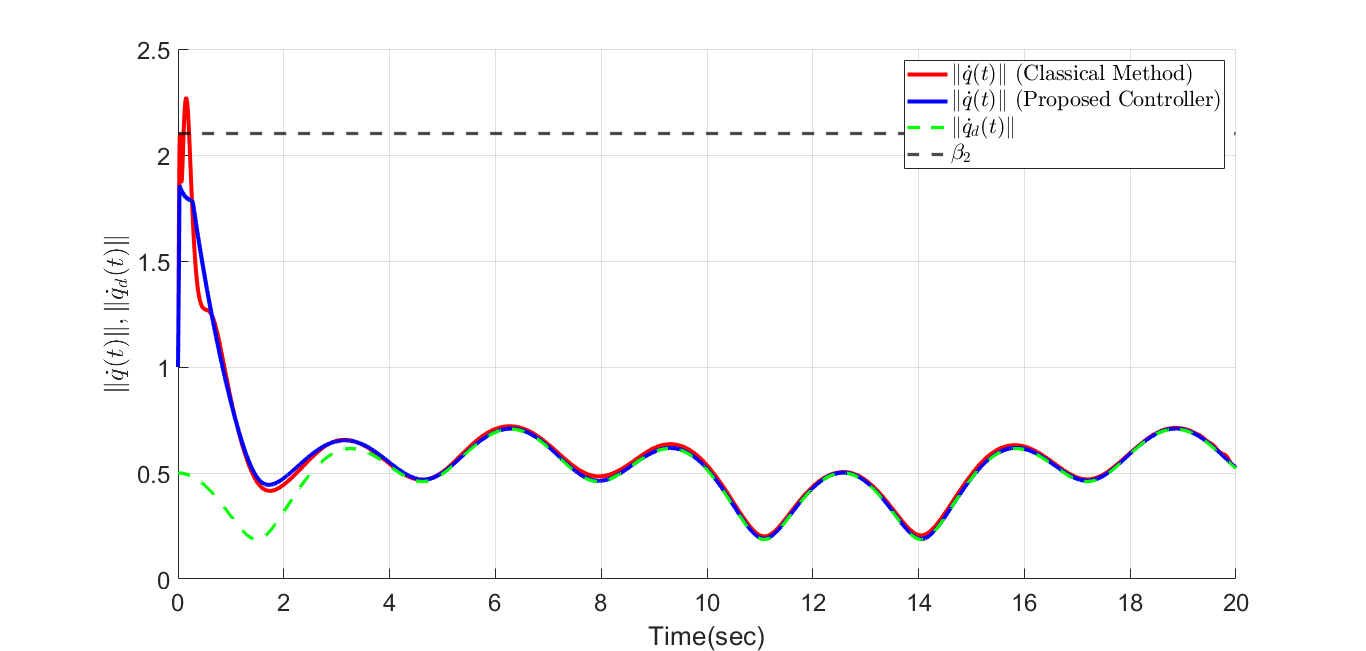}
\caption{Comparison of the angular velocity $(\dot{q}(t))$ between the classical (\ref{c1}) and the proposed (\ref{proposedlaw11}) adaptive controllers. The desired velocity and the velocity constraint are shown using the dotted lines.}
\label{qdot}
\end{figure}

Fig. \ref{fte111} shows that the proposed law ensures the filtered tracking error remains within bounds, i.e. $\|r\|<0.56$, which in turn ensures the boundedness of the trajectory tracking error and its derivative within user-defined constraints, as seen in Fig. \ref{e}, while the constraints are violated with the classical method. Figs. \ref{q}-\ref{qdot} show that the proposed controller guarantees that the states are within the user-defined safe set while tracking the desired reference trajectory, however states go beyond the safe region using the classical method. Furthermore, control effort is also confined to the user-defined constrained region with the proposed control technique, whereas the input constraint gets violated in case of the classical approach (Fig. \ref{input111}). 
%Also, the proposed controller shows better tracking performance than the classical MRAC (Fig. 3).
\begin{figure}[h!]
\centering
\includegraphics[width=\linewidth]{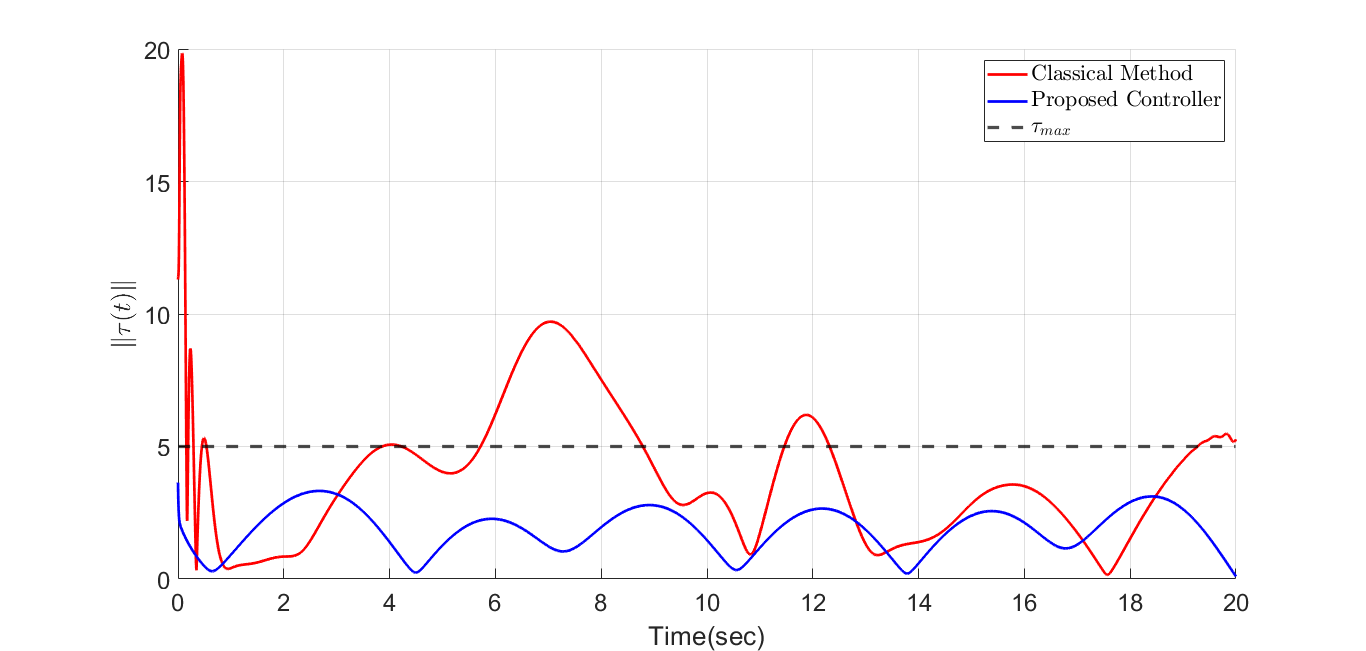}
\caption{Comparison of the control input ($\tau(t)$) between the classical (\ref{c1}) and the proposed (\ref{proposedlaw11}) adaptive controllers.}
\label{input111}
\end{figure}
%To observe the effect on the closed loop system's performance by varying the adaptation gain, the following cases of low and high gains are considered.

%\textit{Case 1.} $\Gamma_x$ and $\Gamma_r$ are chosen as $\Gamma_x=diag\{0.6,0.6\}$ and $\Gamma_r=diag\{0.6,0.6\}$. Other design parameters are unchanged.

%\textit{Case 2.} $\Gamma_x$ and $\Gamma_r$ are chosen as $\Gamma_x=diag\{150,150\}$ and $\Gamma_r=diag\{150,150\}$. Other design parameters are unchanged.

%\begin{figure}[h!]
%\centering
%\includegraphics[width=0.95\linewidth]{photo/e_gain_1.png}
%\caption{Effect on tracking error by variation of adaptive gains using proposed controller (\ref{proposedlaw}) and classical MRAC (\ref{MRAC}).}
%\label{gainerror}
%\end{figure}

%\begin{figure}[h!]
%\centering
%\includegraphics[width=0.95\linewidth]{photo/u_gain_1.png}
%\caption{Effect on control effort by varying of adaptive gains using proposed controller (\ref{proposedlaw}) and classical MRAC (\ref{MRAC}).}
%\label{gaininput}
%\end{figure}

 It is seen that increasing the adaptation gain leads to better tracking performance, although the response becomes more oscillatory. For the case of conventional adaptive controller, the improved tracking performance is achieved at the cost of greater control amplitude, leading to violation of the input constraints. The high frequency oscillations in the control input may even violate the actuation rate limits. In contrast, the proposed controller guarantees that the control input remains bounded within the user-defined safe sets for all future time while also ensuring the pre-specified bounds on the plant states.\\
\textit{Remark:} It may be possible to meet the state and input constraints for classical adaptive control by tuning the adaptation gains, however, it is seen that such scenarios are almost always accompanied by degradation in tracking performance. This trade-off between performance and constraint satisfaction is typical of classical adaptive controllers; the proposed approach attempts to address such concerns and enhance the practical applicability of adaptive control for safety-critical applications.
%But that results in very high frequency oscillations in the control effort which might exceed the controller's saturation limit. On the other hand, increment of the controller parameter gain reduces the trajectory tracking error although it has no significant effect on the control effort in case of the proposed controller.  \\

\section{Conclusion}
In this paper, an adaptive control method is proposed for uncertain E-L systems  with user-defined state and input constraints. A BLF-based controller is strategically combined with a saturated controller to ensure that  both the plant state and the control input remain bounded within user-defined safe sets while tracking a desired reference trajectory. The proposed controller also guarantees that the trajectory tracking error asymptotically converges to zero and all the closed-loop signals remain bounded. Simulation studies validate the efficacy of the proposed control law.
Extending the work to other classes of nonlinear systems and analyzing the robustness properties are important areas of future research.

\bibliographystyle{ieeetr}
\bibliography{root}
\end{document}